\journal{Mechanical Systems and Signal Processing}
\begin{document}

\begin{frontmatter}


\title{One-parameter robust global frequency estimator for
slowly varying amplitude and noisy oscillations} 

\author{Michael Ruderman}
\ead{michael.ruderman@uia.no}

\address{University of Agder, 4604-Norway}

\begin{abstract}                          
Robust online estimation of oscillation frequency belongs to
classical problems of system identification and adaptive control.
The given harmonic signal can be noisy and with varying amplitude
at the same time, as in the case of damped vibrations. A novel
robust frequency-estimation algorithm is proposed here, motivated
by the existing globally convergent frequency estimator. The
advantage of the proposed estimator is in requiring one design
parameter only and being robust against measurement noise and
initial conditions. The proven global convergence also allows for
slowly varying amplitudes, which is useful for applications with
damped oscillations or additionally shaped harmonic signals. The
proposed analysis is simple and relies on an averaging theory of
the periodic signals. Our results show an exponential convergence
rate, which depends, analytically, on the sought frequency,
adaptation gain and oscillation amplitude. Numerical and
experimental examples demonstrate the robustness and efficiency of
the proposed estimator for signals with slowly varying amplitude
and noise.
\end{abstract}

\begin{keyword}
Frequency estimation \sep adaptive notch filter \sep robust
estimator \sep identification algorithm
\end{keyword}

\end{frontmatter}

\newtheorem{thm}{Theorem}
\newdefinition{rmk}{Remark}
\newtheorem{lem}[thm]{Lemma}

\section{Introductory note}
\label{sec:1}

A common problem associated with online estimation of the unknown
frequency of harmonic signals has been studied in multiple works
(see, e.g.,
\cite{bodson1997,hsu1999,marino2003,mojiri2004,vedyakov2017} and
references in the latter). Some differing approaches rely on
extended observer or Kalman filter principles (see, e.g.,
\cite{bittanti2000}). Other approaches (particularly those
accommodated in power electronics applications) use so-called
phase-locked-loop (PLL) algorithms (see, e.g., \cite{karimi2004}).
Frequency estimation can be seen to be one of the fundamental
questions in systems and signals theory, and it has multiple
practical mechanical and electrical applications. For instance, it
can be found in the following: power system converters and
controllers (e.g., \cite{karimi2004}); active control of sound and
vibrations (e.g., \cite{fuller1995}); rotary machines, like in
magnetic bearings (e.g., \cite{herzog1996}); drives with
eccentricities (e.g., \cite{DeWit2000}); and motion control with
periodic and vibrational disturbances of, e.g., disk drives
\cite{sacks1996} or suspensions \cite{landau2005}, to mention just
a few.

In several application scenarios, like in the case of damped
vibrations, the time-varying amplitude of a harmonic signal is,
however, the most challenging factor for robust and sufficiently
fast estimation of the unknown frequency. Besides, measurement and
process noise can further degrade those estimation approaches for
which the convergence is theoretically proven, but which can
suffer through sensitivity to real measured (physical) data,
making their applicability questionable in terms of robust
convergence. Among the numerous existing frequency-estimation
methods, the globally convergent estimator, introduced in
\cite{hsu1999}, appears promising due to its structural simplicity
and low number of design parameters. This paper focuses on the
globally convergent frequency estimator (cf.
\cite{hsu1999,mojiri2004}) and proposes a one-parameter robust
modification, which targets unbiased harmonic signals with a
slowly varying amplitude and band-limited white noise.

The rest of the paper is structured as follows. The problem
statement for a noisy harmonic signal with slowly varying
amplitude and an unknown frequency of interest is given in Section
\ref{sec:2}. The existing globally convergent frequency estimator
(relevant to this work) is summarized in Section \ref{sec:3}. In
this regard, differences in the proposed estimator (and therefore
the contributions of the paper) are highlighted. The main results,
with corresponding analysis and proofs, are provided in Section
\ref{sec:4}. In Section \ref{sec:5}, various simulated and
experimental harmonic signals are shown as confirming the
highlighted estimator properties. Brief conclusions are drawn in
Section \ref{sec:6}.

\section{Problem statement}
\label{sec:2}

We consider a classical estimation problem, which is of importance
for system identification and adaptive control, where a signal of
the form
\begin{equation}
\sigma(t) = k(t) \sin(\omega_0 t) + \eta(t), \label{eq:2:1}
\end{equation}
is the single measured oscillating quantity. The harmonic signal
has a slowly varying\footnotemark{\footnotetext{For the rest of
the paper, we will assume a sufficiently slow amplitude variation,
i.e., in terms of a low $|\dot{k}|$, compared to the basic
frequency $\omega_0$ of the sinusoidal signal. Using the averaging
theory of the periodic signals, we will assume that the system
\eqref{eq:2:1} contains timescale separation, which means a faster
oscillation with the angular frequency $\omega_0$ versus a slower
drift of $k(t)$.}} amplitude within a certain range $\underline{k}
\leq k \leq \overline{k}$, and an unknown angular frequency
$\omega_0 > 0$, which we are mainly interested in. The measured
$\sigma(t)$ is affected by the noise $\eta(t)$, which is a
zero-mean ergodic process uniformly distributed over the whole
frequency range $\omega$. In other words, $\eta(t)$ can be seen as
power-limited (and therefore band-limited) white noise, seen from
a signal-processing viewpoint. We will assume a constant power
spectral density (PSD) of the noise, i.e., $\mathrm{PSD} \{
\eta(\omega) \} \equiv p = \mathrm{const}$, and a reasonable (in
terms of the signal-to-noise ratio) finite variance $\mathrm{Var}
\{ \eta(t) \} = \tau^2$, this without going into further details
about the spectral properties of $\eta(j \omega)$.

Although biased sinusoids with unknown frequency are often
considered (e.g., \cite{marino2003,aranovskiy2010}), this work
focuses only on unbiased sinusoidal signals \eqref{eq:2:1}, while
keeping in mind that some constant bias can be removed by
high-pass or other dedicated filtering approaches. Rather, we
emphasize that $k(t)$ can be slowly varying. For instance, if one
allows for $k(t) \rightarrow 0$ with the progressing time, then
\eqref{eq:2:1} will represent a damped oscillation response
$\sigma(t)$, where only a finite number of the periods is
available for estimating $\omega_0$.

\section{Globally convergent adaptive notch filter}
\label{sec:3}

The globally convergent adaptive filter (also denoted as an
adaptive notch filter (ANF) due to the structural properties of a
second-order notch filter) was provided and analyzed in
\cite{hsu1999}, based on the original work \cite{regalia1991}. A
continuous-time version of the ANF \cite{regalia1991} can also be
found in \cite{bodson1997}. The ANF considers a sinusoidal signal
\eqref{eq:2:1} but without explicit amplitude variation $k(t)$ and
noise $\eta(t)$, and estimates the unknown frequency $\omega_0$
using the following structure \cite{hsu1999}:
\begin{eqnarray}
\label{eq:3:1}
  \ddot{x} + 2\zeta \theta \dot{x} + \theta^2 x &=& \theta^2 \sigma, \\
  \dot{\theta} &=& -\gamma x \bigl( \theta^2 \sigma  -  2\zeta \theta \dot{x}
  \bigr).
\label{eq:3:2}
\end{eqnarray}
Subsequently, another scaling of the forcing signal in
\eqref{eq:3:1} and, correspondingly, error signal in
\eqref{eq:3:2} was proposed and analyzed in \cite{mojiri2004},
while it was claimed that the adaptation scheme and necessary
stability condition become independent of the damping ratio
$\zeta$. Both aforementioned formulations of the ANF have two real
positive design parameters: $\gamma$, which determines the
'adaptation speed', and $\zeta$, which determines the 'depth of
the notch' and hence noise sensitivity, according to
\cite{mojiri2004}. In both approaches \cite{hsu1999,mojiri2004},
stability analysis and proof of convergence rely on the concept of
the uniqueness of a periodic orbit $\bigl[ \bar{x}, \bar{\dot{x}},
\bar{\theta} \bigr](t)$, where $(\bar{\cdot})$ denotes equilibria
(not necessarily constant). Towards this unique orbit, with
$\bar{\theta} = \omega_0$, the adaptive system \eqref{eq:3:1},
\eqref{eq:3:2} is then shown to converge globally. Since its
appearance, the ANF approach has become popular, and performance,
design equations and applications, as well as signal/noise ratios
and initialization, have been addressed in several works (see,
e.g., \cite{clarke2001}).

The estimator introduced below keeps the same motivating ANF
principle while (i) adapting the scaling factor, (ii) using the
sign instead of the $x$-state in \eqref{eq:3:2}, and (iii)
canceling the damping ratio design parameter, which is shown to be
unnecessary. The proposed convergence analysis is based on a simpler
consideration of steady states in the frequency domain, through
keeping $\theta$ as a frozen parameter, which avoids the challenges of
demonstrating convergence of \eqref{eq:3:1}, \eqref{eq:3:2} into a
unique periodic orbit (cf. \cite{hsu1999,mojiri2004}). At the same
time, we can demonstrate explicitly an exponential convergence rate
and can take into account (explicitly) the impact of measurement noise
and (implicitly) insensitivity to slow amplitude variations.

\section{Main results}
\label{sec:4}

The proposed robust global frequency estimator is provided by the
auxiliary second-order system
\begin{eqnarray}
\label{eq:4:1}
\left[%
\begin{array}{c}
  \dot{x}_1 \\
  \dot{x}_2 \\
\end{array}%
\right] & = & \left[%
\begin{array}{cc}
  0 & 1 \\
  -\theta^2 & -2 \theta \\
\end{array}%
\right] \left[%
\begin{array}{c}
  x_1 \\
  x_2 \\
\end{array}%
\right] + \left[%
\begin{array}{c}
  0 \\
  2 \theta \\
\end{array}%
\right] \sigma,
\\[1mm]
  y & = & \left[%
\begin{array}{cc}
  0 & 1 \\
\end{array}%
\right] \left[%
\begin{array}{c}
  x_1 \\
  x_2 \\
\end{array}%
\right], \label{eq:4:2}
\end{eqnarray}
and the adaptation law
\begin{equation}
\dot{\theta} = -\gamma \, \mathrm{sign}(x_1)(\sigma - y),
\label{eq:4:3}
\end{equation}
where $\gamma > 0$ appears as the single design parameter. Being
excited by $\sigma(t)$, the vector of dynamic states $[x_1,
x_2]^T$ performs steady-state oscillations at the angular
frequency $\theta = \omega_0$, once the input-output
synchronization brings the \emph{output error} to zero, i.e., $e =
\sigma - y \rightarrow 0$. For a slowly varying $k(t)$, the
$2 \theta$ input coupling factor used (on the right-hand side of
\eqref{eq:4:1}) endows the input-output ratio of \eqref{eq:4:1},
\eqref{eq:4:2} to be independent of $k$ in a steady state, thus
making a frequency estimation insensitive to slow amplitude
variations. It is also worth noting that, unlike similar
adaptation mechanisms \cite{hsu1999}, \cite{mojiri2004}, no
further scaling factors are assigned to the forcing signal
$\sigma(t)$. A clear advantage of this purposeful simplification
will be shown later when analyzing the convergence rate. While an
ANF contains an additional damping parameter, which determines the
'depth of the notch' and, according to \cite{hsu1999,mojiri2004},
its noise sensitivity, we deliberately assume a critically damped
dynamic system \eqref{eq:4:1}, \eqref{eq:4:2} (compared with
\eqref{eq:3:1} for $\zeta=1$). This simplification not only
removes the second design parameter but also eliminates transient
oscillations of $\theta(t)$ in the course of the frequency adaptation.
This comes as no surprise since the linear $[x_1, x_2]^T$
sub-dynamics with $\zeta=1$ have no conjugate-complex poles and
therefore no transient oscillations (thus also not in $e(t)$).
Later, we will demonstrate a performance degradation when a
damping factor $0 < \zeta < 1$ is included, as a parameter, in
\eqref{eq:4:1}, \eqref{eq:4:2}.

Denoting the system matrix and input and output coupling vectors
in \eqref{eq:4:1}, \eqref{eq:4:2} by $A$, $B$ and $C$,
correspondingly, the input-to-output transfer function
\begin{equation}
G(j\omega) = \frac{y(j\omega)}{\sigma(j\omega)} = C^T (j\omega
I-A)^{-1}B, \label{eq:4:4}
\end{equation}
can be written for the frequency domain, when $\theta$ is
considered to be a frozen parameter. Obviously, $I$ is the $2
\times 2$ identity matrix, and $\omega$ is the angular frequency
variable. As long as the output error is not zero, it is excited
as
\begin{equation}
e(j\omega) = (1-G) G^{-1} y(j\omega), \label{eq:4:5}
\end{equation}
by the forced dynamics \eqref{eq:4:1}, \eqref{eq:4:2}, so that
$\bigl | e(j\omega)   \bigr | > 0$ for all $\omega$ excepts
$\omega = \omega_0$. This motivates the adaptation law
\eqref{eq:4:3}, which ensures $\dot{\theta} \neq 0$ always excepts
at $\theta=\omega_0$. Denoting the above transfer function, i.e.,
from the filter output to the error, by
$$
E(j\omega) = \frac{e(j\omega)}{y(j\omega)} =
\bigl(1-G(j\omega)\bigr) G(j\omega)^{-1},
$$
we can take a closer look at the amplitude and phase response of
$E(\cdot)$, illustrated in Fig. \ref{fig:1} for $\theta =
10$ rad/sec.
\begin{figure}[!h]
\centering
\includegraphics[width=0.98\columnwidth]{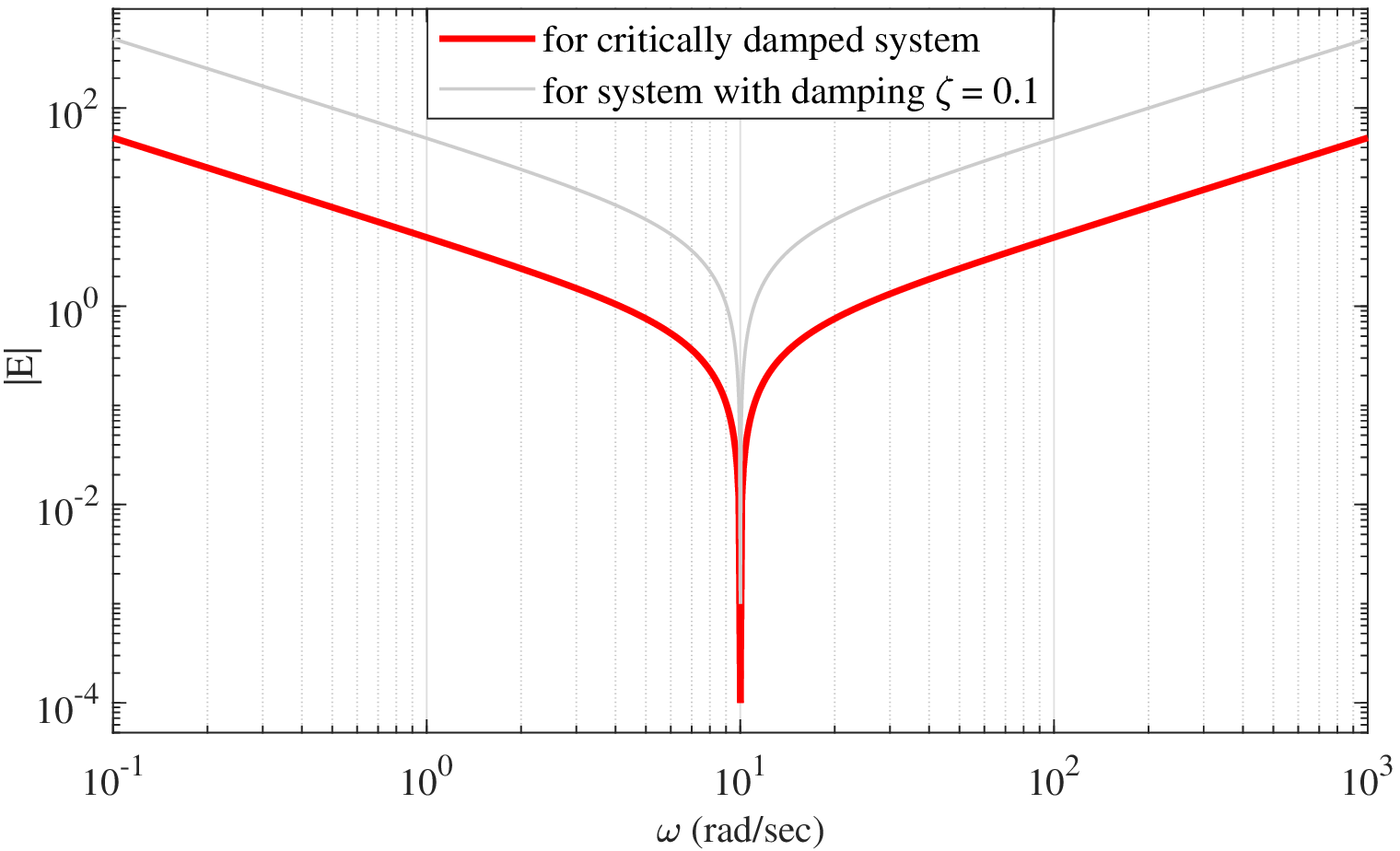}
\includegraphics[width=0.98\columnwidth]{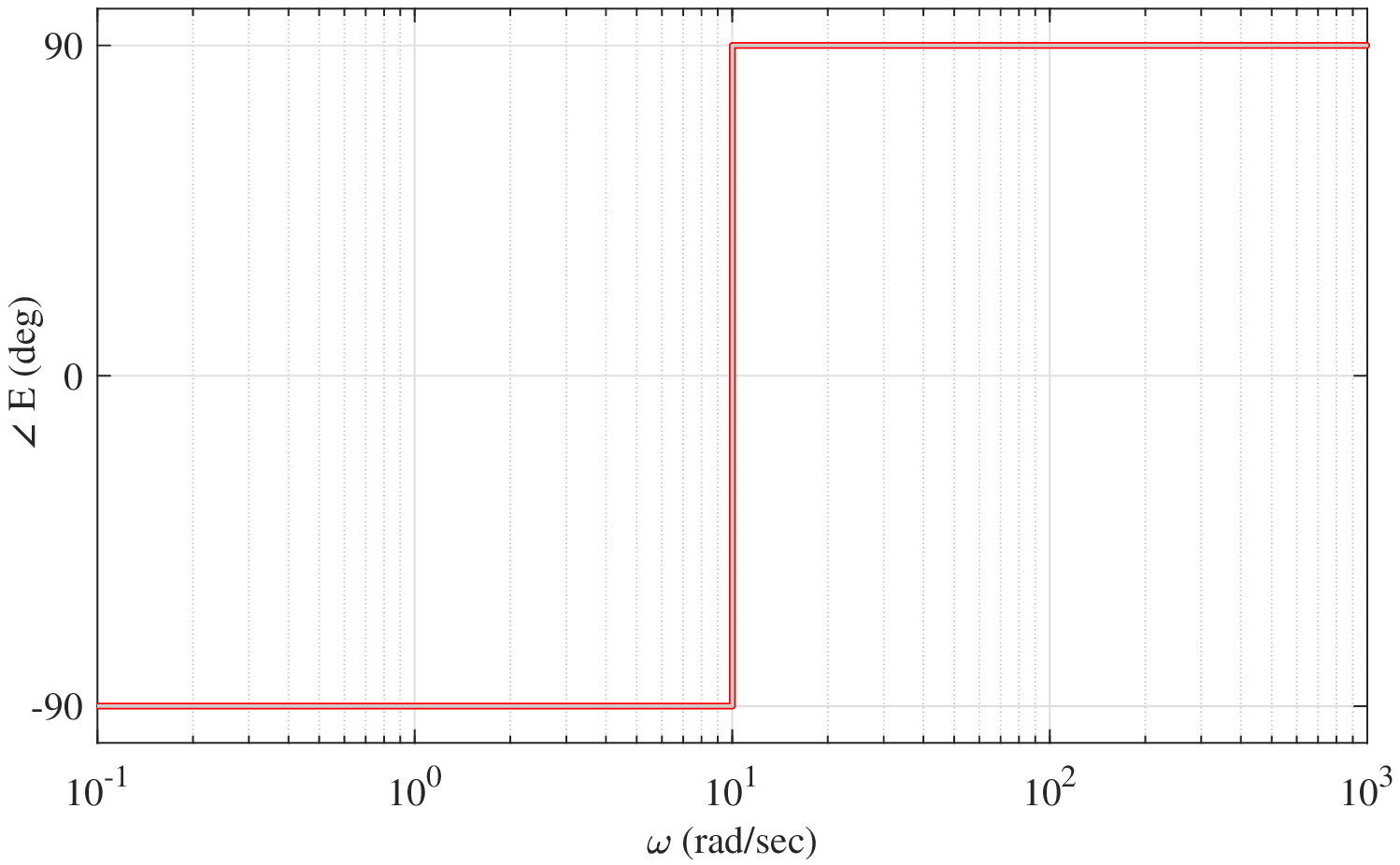}
\caption{Amplitude and phase response of the transfer function
$E$, for exemplary $\theta = 10$ rad/sec; the transfer function
with an additional damping $\zeta = 0.1$ (cf. with (1)) is
included (gray line) for the sake of comparison.} \label{fig:1}
\end{figure}
It becomes evident that the \emph{error transfer function}
amplitude $|E|$ has global minima at $\omega = \theta$, so that
$|e(j\omega)| \rightarrow 0$ when $\theta(t) \rightarrow
\omega_0$, and this is independent of the estimator initialization
$\theta(0)$. Indeed, without loss of generality, we can assume an
arbitrary $0 < \theta(0) \neq \omega_0$ so that $\bigl | e(j
\omega) \bigr |_{\omega=\omega_0} = a$, where $a > 0$ is some
positive magnitude determined by the oscillating output state and
error transfer function. We also recall that the oscillating
output in a steady state will then be given by $y = G(j\omega)
\sigma$, since the system \eqref{eq:4:1}, \eqref{eq:4:2} is
asymptotically stable and, moreover, critically damped. This
basically leads to the harmonic behavior of $y(t)$ and $e(t)$,
provided $\omega_0 = \mathrm{const}.$

Further, it becomes evident (cf. phase response in Fig.
\ref{fig:1}) that $\angle e(j \omega)$ always lags behind the
phase $\angle y(j \omega)$ for $\omega < \omega_0$, which is due to
$\angle E(j \omega) \rightarrow -\pi/2$. Correspondingly, $\angle
e(j \omega)$ always leads before the phase $\angle y(j \omega)$
for $\omega > \omega_0$, which is due to $\angle E(j \omega)
\rightarrow \pi/2$. It is only at $\omega = \omega_0$ where both signals
$y(j \omega)$ and $e(j \omega)$ are in the phase and $e
\rightarrow 0$. The $\pm \pi/2$ phase response of $\angle E(j
\omega)$ allows for providing an ever-increasing or decreasing
$\theta(t)$ on the left-hand or, respectively, right-hand side
of $\omega_0$. With this in mind, we are now in a position to
formally prove global convergence of the adaptation law
\eqref{eq:4:3}.

\begin{thm} 
\label{thm:1} The frequency estimator
\eqref{eq:4:1}-\eqref{eq:4:3} is global for \eqref{eq:2:1} and
converges asymptotically as $\theta(t) \rightarrow \omega_0$ for
$t \rightarrow \infty$, regardless of the $\theta(0)
> 0$ initialization, provided the small adaptation gains $\gamma > 0$
and slowly varying amplitudes $k(t)$. The frequency-estimation error
$\varepsilon (t) = \omega_0 - \theta(t)$ converges uniformly and
exponentially in terms of
\begin{equation}
\bigl | \varepsilon(t_2) \bigr | < \alpha \bigl | \varepsilon(t_1)
\bigr | \exp \bigl( - \beta (t_2-t_1) \bigr) \label{eq:4:01}
\end{equation}
for some $\alpha > 0$ and $\forall \; t_2>t_1$. The exponential
rate of convergence is independent of $\eta(t)$ noise, as follows:
\begin{equation}
\beta = 0.5 \, \gamma k \, \omega_0^{-1} + \delta, \label{eq:4:02}
\end{equation}
where $\delta$ is a small positive constant independent of
$\gamma$, $k$, $\omega_0$.
\end{thm}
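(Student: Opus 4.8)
The plan is to exploit the timescale separation asserted in the problem statement and treat $\theta$ as a frozen parameter, so that the linear subsystem \eqref{eq:4:1}, \eqref{eq:4:2} settles into a sinusoidal steady state and the slow drift of $\theta$ is governed by the period-average of \eqref{eq:4:3}. First I would evaluate $G$ explicitly: since $\det(j\omega I-A) = (j\omega+\theta)^2$, the filter is critically damped with a double pole at $-\theta$, and
$$
G(j\omega) = \frac{2\theta j\omega}{(j\omega+\theta)^2}, \qquad 1-G = \frac{\theta^2-\omega^2}{(j\omega+\theta)^2}, \qquad E(j\omega) = \frac{\theta^2-\omega^2}{2j\theta\omega}.
$$
Two structural facts drive everything. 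The output is the derivative of the first state, $y=\dot{x}_1$, so $y$ leads $x_1$ by exactly $\pi/2$; and $E$ is purely imaginary, contributing $-\pi/2$ for $\omega<\theta$ and $+\pi/2$ for $\omega>\theta$, with magnitude $|E(j\omega_0)| = |\theta^2-\omega_0^2|/(2\theta\omega_0)$, consistent with Fig.~\ref{fig:1}. In the frozen-$\theta$ steady state driven by $\sigma=k\sin(\omega_0 t)$, all signals oscillate at $\omega_0$ and the error amplitude is $A_e = |1-G(j\omega_0)|\,k = k\,|\theta^2-\omega_0^2|/(\theta^2+\omega_0^2)$.

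Next I would average the adaptation law. Writing $\langle\cdot\rangle$ for the average over one period, $\langle\dot{\theta}\rangle = -\gamma\,\langle\mathrm{sign}(x_1)\,e\rangle$, a product of a unit square wave (the $\mathrm{sign}$ of a sinusoid) and a sinusoid of amplitude $A_e$, both at $\omega_0$. Only the fundamental Fourier component of the square wave survives, so the average reduces to $\tfrac{2}{\pi}A_e\cos\Delta$, where $\Delta$ is the phase of $e$ relative to $x_1$. Combining the two structural facts gives $\Delta = \pi/2 + \angle E \in \{0,\pi\}$, with $\cos\Delta = \mathrm{sign}(\theta^2-\omega_0^2)$. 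Substituting $A_e$ and $\theta^2-\omega_0^2 = -\varepsilon(\theta+\omega_0)$ yields the averaged slow dynamics
$$
\langle\dot{\varepsilon}\rangle = -\beta(\theta)\,\varepsilon, \qquad \beta(\theta) = \frac{2}{\pi}\,\gamma k\,\frac{\theta+\omega_0}{\theta^2+\omega_0^2} > 0 \quad \text{for all } \theta>0.
$$
Because $\beta(\theta)>0$ on the whole positive axis, $\varepsilon$ decays monotonically for any $\theta(0)>0$, and $\theta$ stays trapped between $\theta(0)$ and $\omega_0$ (hence positive), which is exactly the claimed global convergence. For the rate, I would evaluate at the equilibrium, where $\beta(\omega_0) = \tfrac{2}{\pi}\gamma k\,\omega_0^{-1}\approx 0.64\,\gamma k\,\omega_0^{-1}$; this reproduces the dependence in \eqref{eq:4:02} (the quoted $0.5$ being a convenient lower bound on the $2/\pi$ of the square-wave fundamental). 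Bounding $\beta(\theta)$ below by a positive constant over the compact range of $\theta$ then gives the uniform exponential estimate \eqref{eq:4:01}, with $\alpha>0$ absorbing the initial transient and the departure of $\beta(\theta)$ from $\beta(\omega_0)$, and the additive $\delta$ collecting the higher-order averaging correction together with the decay already provided by the stable pole at $-\theta$. Noise independence follows because $\eta$ is zero-mean and ergodic and decorrelated from $\mathrm{sign}(x_1)$, so $\langle\mathrm{sign}(x_1)\,\eta\rangle\to 0$ and no noise term enters the averaged drift; the $2\theta$ input scaling makes $G(j\omega_0)=1$ independently of $k$, which renders the estimate insensitive to a slowly drifting amplitude.

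The main obstacle is the rigorous justification of the averaging rather than the averaged computation itself. One must show that the true trajectory stays uniformly close to the averaged one over an infinite horizon while three non-standard features act at once: the non-smooth $\mathrm{sign}(x_1)$ nonlinearity, the stochastic band-limited noise $\eta$ (whose drift contribution must vanish under the ergodicity/decorrelation hypotheses, not merely in expectation), and the slowly time-varying $k(t)$ (only ``frozen'' up to an error controlled by $|\dot{k}|/\omega_0$). Making $\delta$ and $\alpha$ precise amounts to quantifying these three averaging errors; the standard remedy is to require $\gamma$ and $|\dot{k}|$ small enough that the slow dynamics remain an $O(\gamma)$ perturbation of the exponentially stable averaged system, which is precisely the ``small adaptation gains / slowly varying amplitudes'' proviso in the statement.
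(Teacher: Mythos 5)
Your proposal is correct and follows essentially the same route as the paper's proof: freeze $\theta$, evaluate the critically damped filter \eqref{eq:4:1}, \eqref{eq:4:2} in sinusoidal steady state, use the $\pm\pi/2$ phase of $E(j\omega)$ to fix the drift direction, and average the adaptation law \eqref{eq:4:3} to obtain a first-order linear ODE for $\varepsilon$ with rate proportional to $\gamma k\,\omega_0^{-1}$. Two execution details differ and are worth noting. First, the paper obtains the sign of $\dot{\theta}$ pointwise rather than on average: since $e$ and $x_1$ are exactly in antiphase (cf. \eqref{eq:4:7}, \eqref{eq:4:8}), the law collapses to $\dot{\theta} = \gamma k\,\Omega(\theta)\,\bigl|\sin(\omega_0 t + c + \pi/2)\bigr| \geq 0$ for all $t$, a slightly stronger monotonicity statement than your averaged one. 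Second, for the rate the paper linearly under-approximates $\Omega(\theta)=|\theta^2-\omega_0^2|/(\theta^2+\omega_0^2)$ by $1-\theta\omega_0^{-1}$ (Fig.~\ref{fig:2}) and replaces the rectified sinusoid by a claimed mean value of $1/2$, whereas your factorization $\theta^2-\omega_0^2=-\varepsilon(\theta+\omega_0)$ avoids any linearization and your Fourier computation gives the exact square-wave correlation constant $2/\pi$, yielding $\beta(\theta)=\tfrac{2}{\pi}\gamma k(\theta+\omega_0)/(\theta^2+\omega_0^2)$ exactly; your reading of the paper's $0.5$ as a conservative lower bound on $2/\pi$ is precisely right (the mean of $|\sin|$ is $2/\pi\approx 0.64$, not $1/2$, so the paper's constant is valid only as an under-estimate, which is all the theorem needs). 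Your closing caveats about rigorously justifying the averaging step (non-smooth $\mathrm{sign}$, stochastic $\eta$, drifting $k$) apply equally to the paper's own proof, which likewise leaves them at the level of the timescale-separation assumption.
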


\begin{proof} 
Let $0 < \theta(0) < \omega_0$ be an arbitrary initialization of
the estimator \eqref{eq:4:1}-\eqref{eq:4:3}. Note that for
$\theta(0) > \omega_0$, the proof is fully identical due to the
phase symmetry of $\angle E\bigl(j \theta \bigr) \rightarrow \pm
\pi/2$ for all $\theta \neq \omega_0$, and as a consequence,
$\mathrm{sign} \bigl( \dot{\theta} \bigr) = \pm 1$. A harmonic
excitation \eqref{eq:2:1} leads to an output harmonic
\begin{equation}
y(t) = b \sin (\omega_0 t + c), \label{eq:4:6}
\end{equation}
where $b = k \bigl | G(j\omega_0) \bigr|$, while the phase shift
$c = \angle \bigl( G(j \omega_0) \bigr)$ is of minor relevance
here. The internal dynamic state of the estimator then becomes
\begin{equation}
x_1 (t) = - \frac{b}{\omega_0} \cos(\omega_0 t+c) = -
\frac{b}{\omega_0} \sin(\omega_0 t + c + \pi/2), \label{eq:4:7}
\end{equation}
and the output error becomes
\begin{equation}
e(t) = a \sin (\omega_0 t + c + \pi/2), \label{eq:4:8}
\end{equation}
where $a = b | E(j \omega_0)| > 0$ for all $\theta < \omega_0$.
Substituting \eqref{eq:4:7} and \eqref{eq:4:8} into
\eqref{eq:4:3}, and writing out $a$ and $b$, results in
\begin{equation}
\dot{\theta} = \gamma k  \, \bigl | G(j \omega_0) \bigr | \, \Bigl
| \frac{1-G(j \omega_0) }{G(j \omega_0) } \Bigr | \, \bigl | \sin
(\omega_0 t + c + \pi/2) \bigr |. \label{eq:4:9}
\end{equation}
It is clear that for all $\gamma, k > 0$ the
$\mathrm{sign}( \dot{\theta}) = +1$ as long as $\theta(t) <
\omega_0$. This implies global uniform convergence and
completes the first part of the proof.

Evaluating the first and second modulus $|\cdot|$-terms in
\eqref{eq:4:9}
\begin{equation}
\bigl | G(j \omega_0) \bigr | \, \Bigl | \frac{1-G(j \omega_0)
}{G(j \omega_0) } \Bigr | = \frac{ \bigl |  \theta^2 - \omega_0^2
\bigr | }{ \theta^2 + \omega_0^2 } \equiv \Omega(\theta),
\label{eq:4:10}
\end{equation}
one can show that the $\theta$-dependent magnitude $\Omega$ always decreases monotonically and $1 \geq \Omega(\theta) \geq 0$
on the interval $ \theta \in [0, \, \omega_0]$. Inspecting the
$\Omega(\theta)$ function, with the $\omega_0$-normalized argument,
as depicted in Fig. \ref{fig:2},
\begin{figure}[!h]
\centering
\includegraphics[width=0.98\columnwidth]{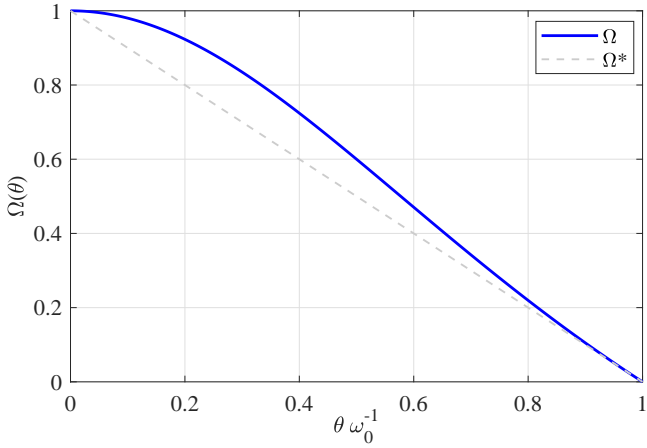}
\caption{$\Omega(\theta)$ function of the $\omega_0$-normalized
argument.} \label{fig:2}
\end{figure}
one can linearly approximate the magnitude of the decrease by
\begin{equation}
\Omega^*(\theta) = - \theta \omega_0^{-1} + 1. \label{eq:4:11}
\end{equation}
Using \eqref{eq:4:11} and the fact of an always positive third
modulus $|\cdot|$-term in \eqref{eq:4:9}, which has a mean value
equal to $1/2$, one can write the first-order approximation
\begin{equation}
\dot{\theta}^* = 0.5 \, \gamma k  \bigl ( -\theta^* \omega_0^{-1}
+ 1 \bigr) \label{eq:4:12}
\end{equation}
of the estimator dynamics. Note that the $\angle E\bigl(j \theta
\bigr) \rightarrow \pm \pi/2$ phase, which determines the sign of
$\dot{\theta}$ (cf. \eqref{eq:4:9}), allows us to write
\eqref{eq:4:12}, regardless of whether $\theta^*(0) < \omega_0$ or
$\theta^*(0) > \omega_0$. Because of \eqref{eq:4:11} is an
under-approximation of the $(\theta \omega_0^{-1})$-dependent
gaining factor of the adaptation rate (cf. Fig. \ref{fig:2}), the
following can be concluded from the eq. \eqref{eq:4:12}. The
asymptotic convergence of $\theta(t)$ to $\omega_0$ has an
exponential rate of $0.5 \gamma k \omega_0^{-1} + \delta$, which
is not slower than that of the dynamics
\begin{equation}
\dot{\theta}(t) + 0.5 \gamma k \omega_0^{-1} \theta(t) = 0.5
\gamma k \omega_0^{-1} \cdot \omega_0, \label{eq:4:13}
\end{equation}
(cf. \eqref{eq:4:12} and \eqref{eq:4:13}). This completes the second
part of the proof.
\end{proof}

\begin{rmk}
\label{rmk:1}

Note that the asymptotic convergence of $\theta(t)$ can be
guaranteed firstly when $y(t)$ is in a steady state, i.e., after
the transients of \eqref{eq:4:1}, \eqref{eq:4:2} dynamics excited
by \eqref{eq:2:1}. The transient response results in a temporary
bias of the output harmonic $y(t)$, depending on the initial phase
of the excitation signal $\sigma(t)$. Consequently, the
$\theta(t)$ trajectory can drift oscillatory in the opposite
direction, away from $\omega_0$, until $y(t)$ is in a steady state
(cf. the first numerical example in Section \ref{sec:5:sub:1}).
This initial by-effect must be taken into account when assigning
$0 < \theta(0) < \omega_0$, while being irrelevant for $\theta(0)
> \omega_0$.
\end{rmk}

\begin{rmk}
\label{rmk:2}

When including the damping factor $0 < \zeta < 1$ as an additional
design parameter of the estimator (cf. \cite{hsu1999,mojiri2004}),
the dynamic system \eqref{eq:4:1}, \eqref{eq:4:2} needs to be
modified in respect of the $A$ and $B$ terms (cf. with
\eqref{eq:3:1}) and consequently becomes non-critically damped. By
implication, additional oscillating dynamics of $y(t)$ appear,
both during the transients and with a steady state of the dynamic
$\theta(t)$ trajectory. Notice that $\zeta$ has a minor influence
on the asymptotic convergence and its exponential rate (cf. Fig.
\ref{fig:1} for $\zeta = 0.1$). At the same time, it deteriorates
the smoothness of $\theta(t)$ during the transient phase and
produces residual steady-state oscillations of $\varepsilon(t)$
(cf. $\theta(t)$-trajectories, exemplified in Fig. \ref{fig:2x}
for $\zeta = \{0.1,0.5,1\}$).
\begin{figure}[!h]
\centering
\includegraphics[width=0.98\columnwidth]{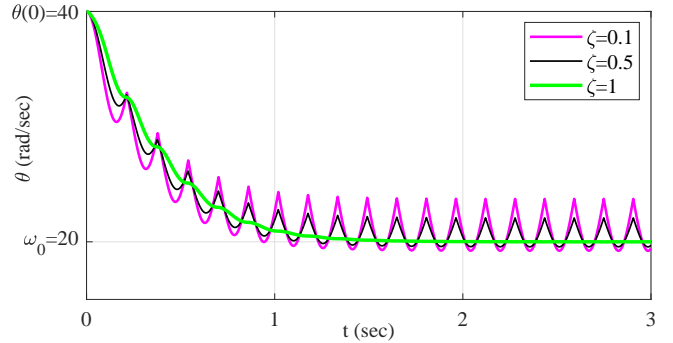}
\caption{Convergence trajectories of the estimator ($\gamma=100$,
$\omega_0=20$ rad/sec) with additional damping $\zeta =
\{0.1,0.5,1\}$.} \label{fig:2x}
\end{figure}
\end{rmk}

Once we have shown the asymptotic convergence of $\theta(t)$ to
$\omega_0$ and estimated the exponential convergence rate, it is
of further interest to analyze the non-vanishing residual
$\varepsilon(t)$, depending on the signal noise $\eta(t)$.

\begin{lem}
\label{lem:1}

The residual frequency-estimation error is a zero-mean ergodic
process, with
\begin{equation}
\mathrm{Var} \{ \varepsilon(t) \} < 4\tau^2 \omega_0 k^{-1},
\label{eq:4:14}
\end{equation}
for the signal \eqref{eq:2:1} with band-limited white noise,
which has variance $\mathrm{Var} \{ \eta(t) \} = \tau^2$.
\end{lem}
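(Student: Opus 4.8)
The plan is to treat the residual, once $\theta$ has settled into a neighborhood of $\omega_0$, as the output of a stable first-order stochastic (Ornstein--Uhlenbeck type) recursion and to read off its stationary statistics. First I would linearize the adaptation law \eqref{eq:4:3} about the lock point $\theta=\omega_0$. There $G(j\omega_0)=1$, so the deterministic (harmonic) part of the error vanishes and $e$ reduces to its noise component $\tilde e=(1-G)\eta$, while the restoring action is exactly the averaged rate $\beta=0.5\,\gamma k\,\omega_0^{-1}$ already obtained in Theorem~\ref{thm:1}. This gives the scalar model
\[
\dot\varepsilon=-\beta\varepsilon+\xi(t),\qquad \xi(t)=\gamma\,\mathrm{sign}\bigl(x_1(t)\bigr)\tilde e(t),
\]
in which the forcing $\xi$ is a (time-varying) \emph{linear} functional of $\eta$ alone.

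The zero-mean and ergodic claims then drop out: a stable LTI filter maps the zero-mean ergodic process $\eta$ --- hence $\tilde e$, and after the deterministic sign-demodulation $\xi$ --- to a zero-mean ergodic output, so $\varepsilon$ inherits both properties. For the variance I would invoke the stationary Ornstein--Uhlenbeck identity $\mathrm{Var}\{\varepsilon\}=S_\xi(0)/(2\beta)$, with $S_\xi(0)$ the forcing PSD at DC. Two structural facts from Section~\ref{sec:4} do the work: the notch identity $|1-G(j\omega)|=|\omega_0^2-\omega^2|/(\omega_0^2+\omega^2)\le 1$, which yields $\mathrm{Var}\{\tilde e\}\le\tau^2$; and $1-G(j\omega_0)=0$, which removes the on-frequency forcing. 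Writing $2\beta=\gamma k\,\omega_0^{-1}$ and $S_\xi(0)=\gamma^2 S_{\,\mathrm{sign}(x_1)\tilde e}(0)$ with $S_{\,\mathrm{sign}(x_1)\tilde e}(0)$ controlled by $\mathrm{Var}\{\tilde e\}\le\tau^2$ makes one power of $\gamma$ cancel and leaves a bound of the stated form $\mathrm{Var}\{\varepsilon\}<4\tau^2\omega_0 k^{-1}$, the constant $4$ absorbing the demodulation factor and the small-gain slack.

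The step I expect to be the main obstacle is the evaluation of $S_\xi(0)$, because $\mathrm{sign}(x_1)$ renders the forcing \emph{cyclostationary} rather than stationary. In steady state $x_1(t)=-(k/\omega_0)\cos(\omega_0 t+c)$, so $\mathrm{sign}(x_1)$ is a unit square wave at $\omega_0$; multiplying by it demodulates $\tilde e$ and folds the noise at each odd harmonic $n\omega_0$ down to DC with the square-wave weights $\propto n^{-2}$. The fundamental fold ($n=1$) lands precisely on the notch and contributes nothing, so the DC intensity is set by the $n\ge 3$ harmonics, each seeing $|1-G(jn\omega_0)|^2=(n^2-1)^2/(n^2+1)^2<1$. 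The two delicate points are therefore (i) replacing the cyclostationary forcing by an equivalent white intensity, which needs the slow-adaptation/timescale-separation hypothesis so that $\beta\ll\omega_0$ and the Ornstein--Uhlenbeck reduction is legitimate, and (ii) bounding the resulting convergent harmonic series by a constant independent of both $\gamma$ and the noise bandwidth --- which is precisely what pins down the numerical factor in \eqref{eq:4:14}.
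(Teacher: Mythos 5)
Your route is genuinely different from the paper's, and it contains a gap that prevents it from ever reaching the stated bound. The structural problem is the $\gamma$-scaling. In your Ornstein--Uhlenbeck reduction the forcing intensity scales as $S_\xi(0)=\gamma^2 S_{\mathrm{sign}(x_1)\tilde e}(0)$ while the restoring rate scales as $2\beta=\gamma k\,\omega_0^{-1}$, so the stationary-variance identity gives
\[
\mathrm{Var}\{\varepsilon\}=\frac{S_\xi(0)}{2\beta}=\gamma\,\omega_0 k^{-1}\,S_{\mathrm{sign}(x_1)\tilde e}(0),
\]
i.e.\ one factor of $\gamma$ \emph{survives}. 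No evaluation of the demodulation series can remove it, because $S_{\mathrm{sign}(x_1)\tilde e}(0)$ is fixed by $x_1$, $G$ and $\eta$ alone and does not depend on $\gamma$. Since the bound \eqref{eq:4:14} is $\gamma$-independent, your construction cannot produce it: what you actually obtain is a bound proportional to $\gamma\, p\,\omega_0 k^{-1}$ (with $p$ the noise PSD), not to $\tau^2\omega_0 k^{-1}$. Your closing remark treats this as merely ``pinning down the numerical factor,'' but it is a mismatch of functional form, not of a constant. Relatedly, the claim that $S_{\mathrm{sign}(x_1)\tilde e}(0)$ is ``controlled by $\mathrm{Var}\{\tilde e\}\le\tau^2$'' is dimensionally untenable: a spectral density value at one frequency is not bounded by a variance; for band-limited white noise of fixed variance $\tau^2$, the DC density $p$ can be made arbitrarily large or small by changing the bandwidth. (Your folding computation itself is sound --- the $n=\pm1$ fold does land on the notch and vanish, and the $n\ge3$ series converges --- but it feeds a formula of the wrong shape.)

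The paper never forms a stationary-variance ratio, which is exactly how it escapes this problem. Its proof compares the two drive \emph{magnitudes} in the adaptation law \eqref{eq:4:3} directly: the noise-induced drive is taken as $|\dot{\hat\theta}|=2\gamma\tau^2$ (from $|\hat e|=2|\eta|$, $\|G\|_\infty=1$, and substituting the variance for the noise magnitude), and the harmonic restoring drive is $0.5\,\gamma k\,\Omega(\theta)$ from \eqref{eq:4:9}. Requiring the restoring drive to dominate, $0.5\,\gamma k\,\Omega(\theta)>2\gamma\tau^2$, cancels $\gamma$ on \emph{both} sides --- this is precisely where the $\gamma$-independence of \eqref{eq:4:14} originates --- and, with the linear approximation \eqref{eq:4:11}, yields the noise-dominated ball $|\varepsilon|<4\tau^2\omega_0 k^{-1}$, which is then identified with the variance bound. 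The cancellation works because the sign-based law makes both drives scale linearly in $\gamma$, whereas in your balance the noise enters quadratically (through an intensity) and the restoring action linearly (through a rate). Carried out honestly, your stochastic-averaging framework proves a different lemma --- residual variance proportional to $\gamma$ and to the noise PSD --- rather than the one stated.
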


Note that the Lemma \ref{lem:1} claims the upper bound of the
second moment of $\varepsilon(t)$ for all times $t > t_c > 0$,
since $\varepsilon(t)$ is a random process driven by $\eta(t)$,
after $\theta(t)$ has converged to a neighborhood of $\omega_0$ at
some finite time $t_c$.

\begin{proof}
Denoting the harmonic part of the signal \eqref{eq:2:1} by
$\tilde{\sigma}(j\omega_0)$ and that of the output \eqref{eq:4:2}
by $\tilde{y}(j\omega_0)$, respectively, the output error in
the frequency domain can be written as
\begin{equation}
e(j\omega) = \tilde{\sigma}(j\omega_0) - \tilde{y}(j\omega_0) +
\eta(j\omega) - G(j\omega)\eta(j\omega). \label{eq:4:15}
\end{equation}
Provided the estimator has already converged to a neighborhood of
$\omega_0$, the harmonic part of the error can be set to zero, and
the residual error, due to the noise, is to be analyzed further.
Since no phase response can be considered for a stochastic noise
signal (only the magnitude), one can assume
\begin{equation}
|\hat{e}(j\omega)| = 2 |\eta(j\omega)| \label{eq:4:16}
\end{equation}
as a worst case (i.e., upper bound) since
$\|G(j\omega)\|_{\infty}=1$. Using the variance (for the noise
magnitude) and substituting it into \eqref{eq:4:3}, one can write
the noise-driven dynamics of the estimate as
\begin{equation}
|\dot{\hat{\theta}}| = \gamma \, 2 \tau^2,  \label{eq:4:17}
\end{equation}
for some neighborhood $\hat{\theta}$ of the true value $\omega_0$.
Note that the sign of $\dot{\hat{\theta}}$ is also a random
process driven by $\mathrm{sign}\bigl( x_1(t,\eta) \bigr) \,
\mathrm{sign}\bigl(\eta(t)\bigr)$ (cf. with \eqref{eq:4:3}). Now,
comparing \eqref{eq:4:9} and \eqref{eq:4:17}, one can see
that for the estimate dynamics not driven by noise, the
following inequality should hold
\begin{equation}
\mathrm{sign}\bigl(\dot{\theta}\bigr) = \mathrm{const} = \gamma \,
0.5 k \, \Omega(\theta) > \gamma \, 2 \tau^2. \label{eq:4:18}
\end{equation}
Using the linear approximation \eqref{eq:4:11}, and substituting
it into \eqref{eq:4:18}, yields
\begin{equation}
k \Bigl( - \frac{\theta}{\omega_0} + 1  \Bigr) > 4 \tau^2.
\label{eq:4:19}
\end{equation}
Solving \eqref{eq:4:19} with respect to $|\varepsilon|=|\omega_0 -
\theta(t)|$ results in
\begin{equation}
|\varepsilon| > 4 \tau^2 \omega_0 k^{-1}, \label{eq:4:20}
\end{equation}
which should be fulfilled so that the estimate dynamics
\eqref{eq:4:3} do not become driven by the signal noise. Turning
back to the random nature of the noise-driven residual estimation
error, one can state \eqref{eq:4:14} (cf. with \eqref{eq:4:20}),
which completes the proof.
\end{proof}

\section{Numerical and experimental examples}
\label{sec:5}

\subsection{Simulated signals}
\label{sec:5:sub:1}

We firstly demonstrate convergence of the estimator
\eqref{eq:4:1}-\eqref{eq:4:3} for a purely sinusoidal signal
$\sigma(t) = \sin(\omega_0 t)$, i.e., with a constant unity
amplitude and without noise. Assuming two estimator
initializations $\theta(0) = \{10,100 \}$ rad/sec, i.e., one
higher and one lower than $\omega_0=50$ rad/sec, the $\theta(t)$
trajectories are shown in Fig. \ref{fig:3} for $\gamma = \{200,
100, 50\}$ adaptation gains.
\begin{figure}[!h]
\centering
\includegraphics[width=0.98\columnwidth]{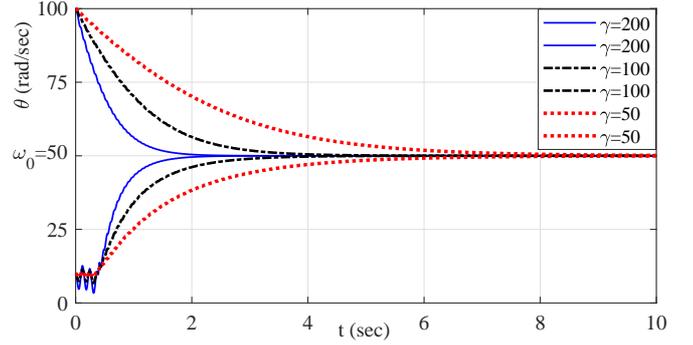}
\caption{Convergence of $\theta(t)$ with $\gamma = \{200, 100,
50\}$ to $\omega_0 = 50$ rad/sec for $\sigma(t) = \sin(\omega_0
t)$ signal with constant amplitude and without noise.}
\label{fig:3}
\end{figure}
The resulting exponential shape of convergence is in accord with
\eqref{eq:4:02}. Note that for $\theta(0) = 10$ rad/sec, the
initial $\theta(t)$ trajectory is oscillatory, progressing in
the opposite direction. This occurs due to a transient response of
\eqref{eq:4:1}, \eqref{eq:4:2} to the $\sigma$-excitation, which
takes about three oscillation periods for the given $\theta(0)$
and $\omega_0$ values (cf. Remark \ref{rmk:1}).

Next, we consider the signal \eqref{eq:2:1} with $k=1$ for
different $\omega_0 = \{10, 30, 50, 70 \}$ rad/sec. For all
angular frequencies, an additional band-limited white noise
$\eta(t)$ with $p=1e-7$ and $\tau^2 = 0.001$ is included, as
exemplified in Fig. \ref{fig:4} (a) for $\omega_0 = 10$
rad/sec. The adaptation gain is set to $\gamma = 100$.
\begin{figure}[!h]
\centering
\includegraphics[width=0.98\columnwidth]{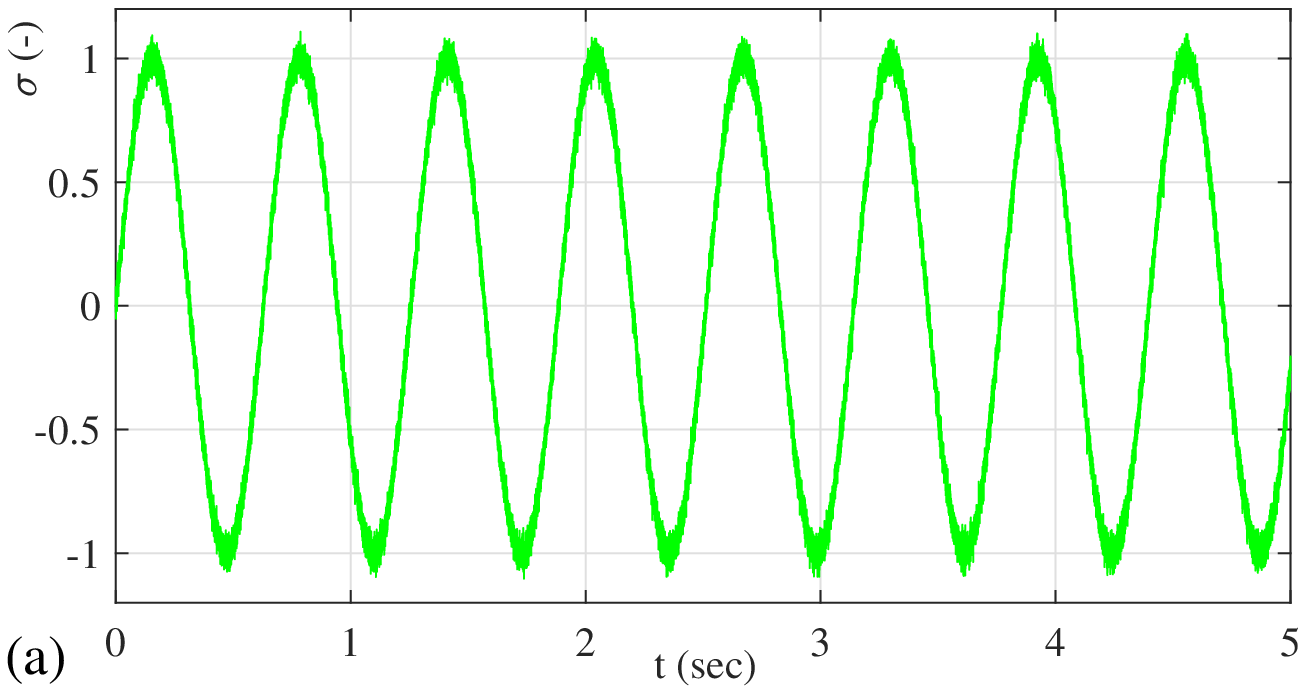}
\includegraphics[width=0.98\columnwidth]{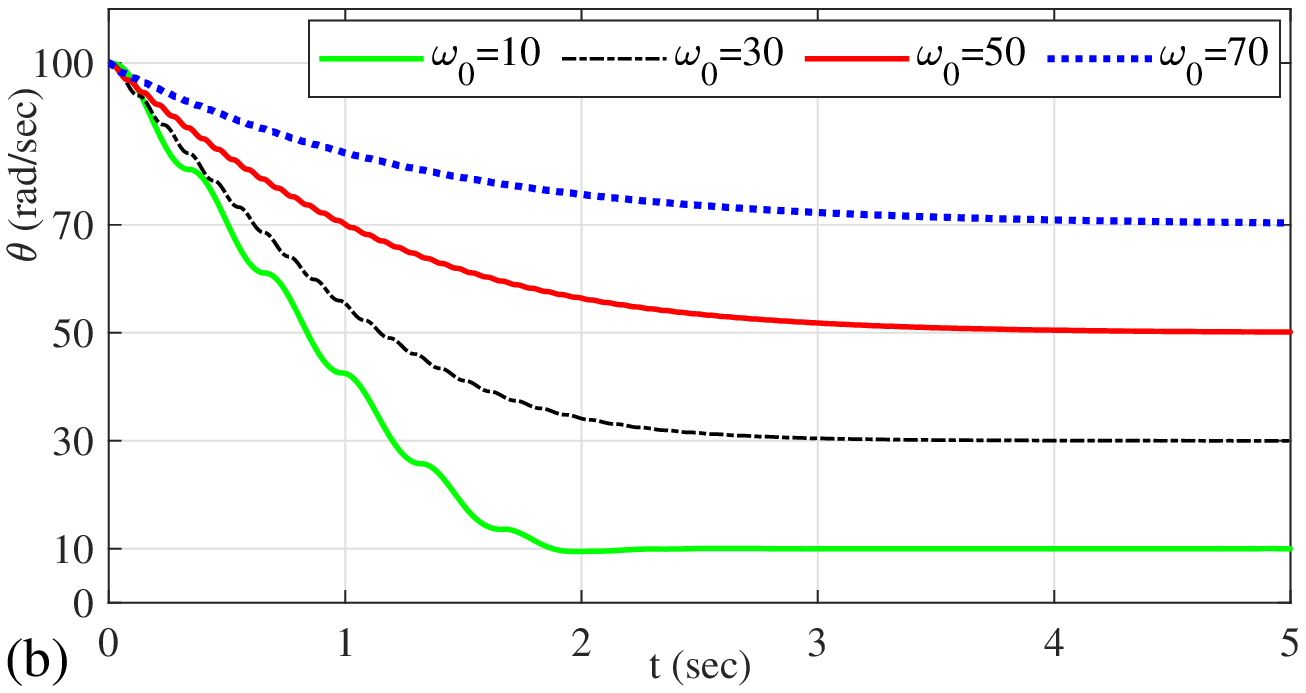}
\caption{(a) Simulated signal \eqref{eq:2:1} with $k=1$, $\omega_0
= 10$ rad/sec; (b) convergence of $\theta(t)$ with $\gamma = 100$
for $\omega_0 = \{10, 30, 50, 70 \}$ rad/sec.} \label{fig:4}
\end{figure}
The convergence of $\theta(t)$ depends inversely on $\omega_0$ and
is in accord with \eqref{eq:4:02}, as can be seen from Fig.
\ref{fig:4} (b). Note that the noise of $\eta(t)$ does not affect
the convergence rate but solely the steady-state fluctuations of
the residual estimation error $\varepsilon(t)$, in accord with the
Lemma \ref{lem:1}.

To demonstrate insensitivity of the frequency estimator to the
slow amplitude variations of the signal \eqref{eq:2:1}, we
consider $\omega_0 = 40$ rad/sec with $k(t)=5+5\sin(0.9t-\pi/2)$,
as depicted in Fig. \ref{fig:5} (a).
\begin{figure}[!h]
\centering
\includegraphics[width=0.98\columnwidth]{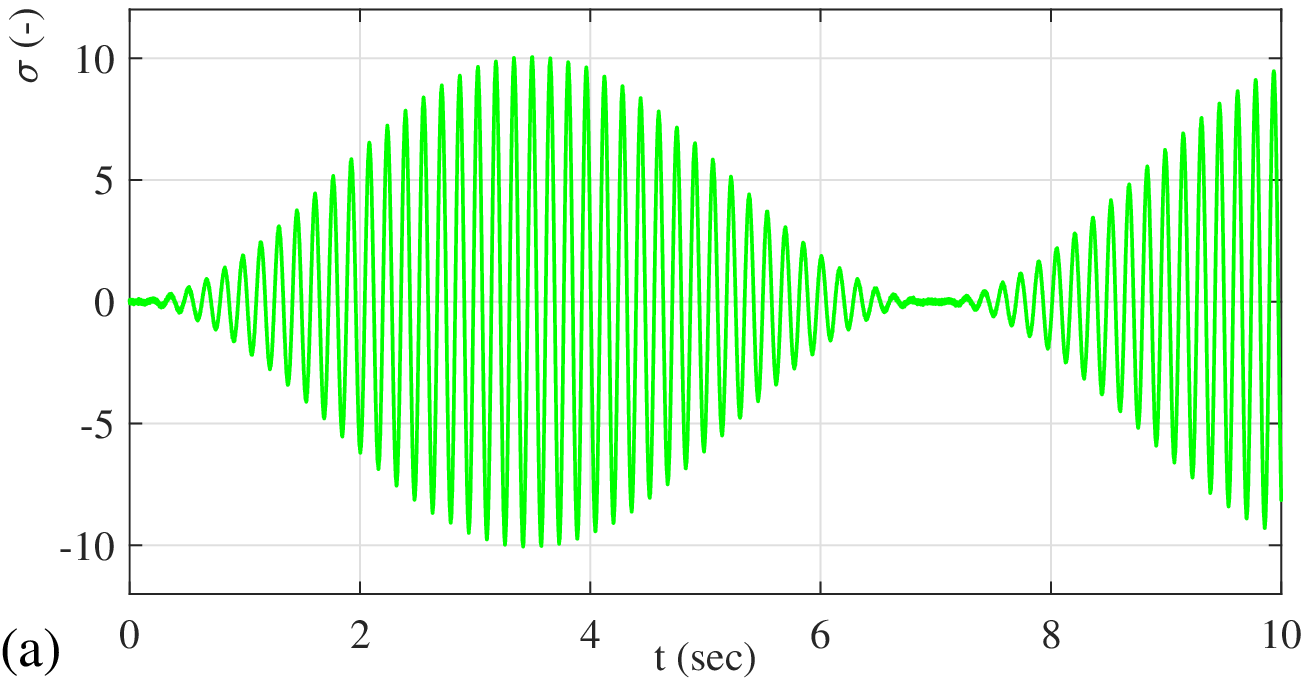}
\includegraphics[width=0.98\columnwidth]{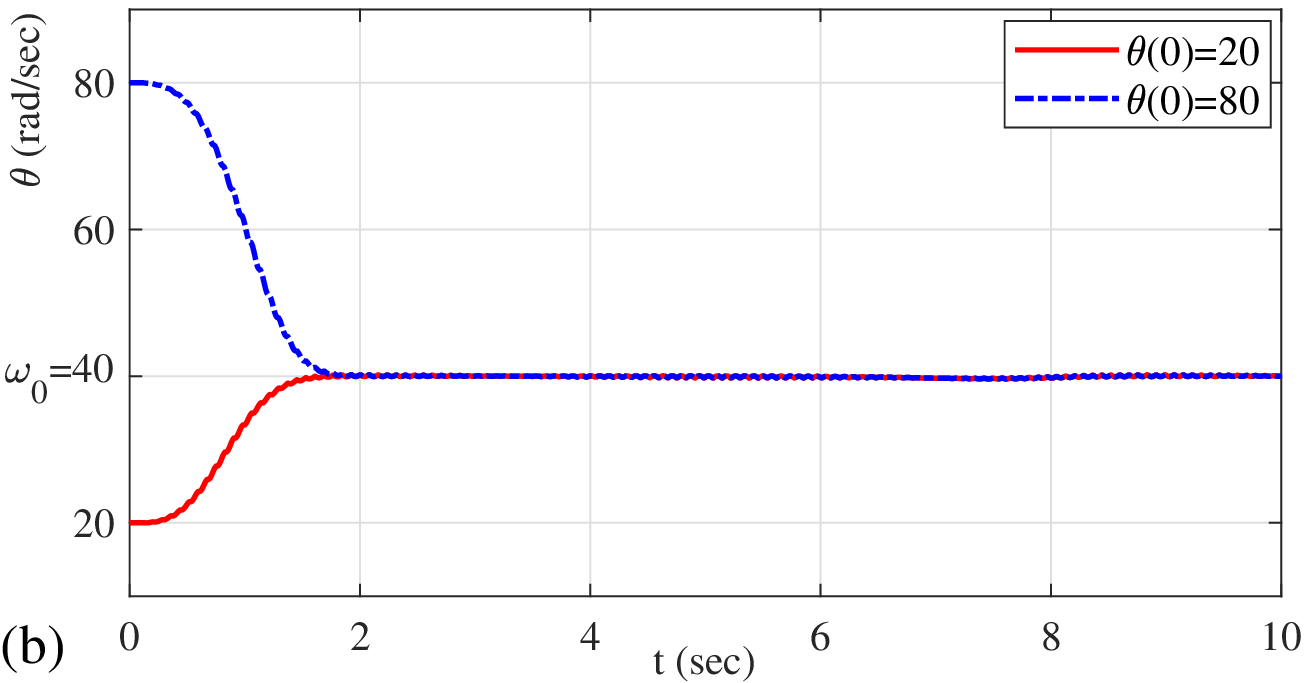}
\caption{(a) Simulated signal \eqref{eq:2:1} with
$k(t)=5+5\sin(0.9t-\pi/2)$ and $\omega_0 = 40$ rad/sec; (b)
convergence of $\theta(t)$ with $\gamma = 100$.} \label{fig:5}
\end{figure}
For the adaptation gain $\gamma = 100$ and two different estimator
initializations $\theta(0)=\{ 20, 80 \}$, the $\theta(t)$
convergence is shown in Fig. \ref{fig:5} (b). After a stable
transient of $\theta(t)$, whose shape is dynamically affected by
the $k(t)$ variations, the estimation error $\varepsilon(t)$ in a
steady state (for $t > 1.8$ sec) does not appear to be affected by persistent variations in the amplitude $k(t)$.

Finally, we are eager to see how the proposed robust estimator can
deal with the continuously varying frequencies $\omega_0(t)$. For
this purpose, the simulated signal \eqref{eq:2:1} with $k=1$ is
designed as a linear down-chirp $\omega_0(t) = \omega_0(0)-\mu t$,
with the frequency bounds $\omega_0(0)=20 \cdot 2\pi$ rad/sec and
$\omega_0(30)=1 \cdot 2\pi$ rad/sec, and the resulting $\mu =
3.98$, as depicted in Fig. \ref{fig:6} (a).
\begin{figure}[!h]
\centering
\includegraphics[width=0.98\columnwidth]{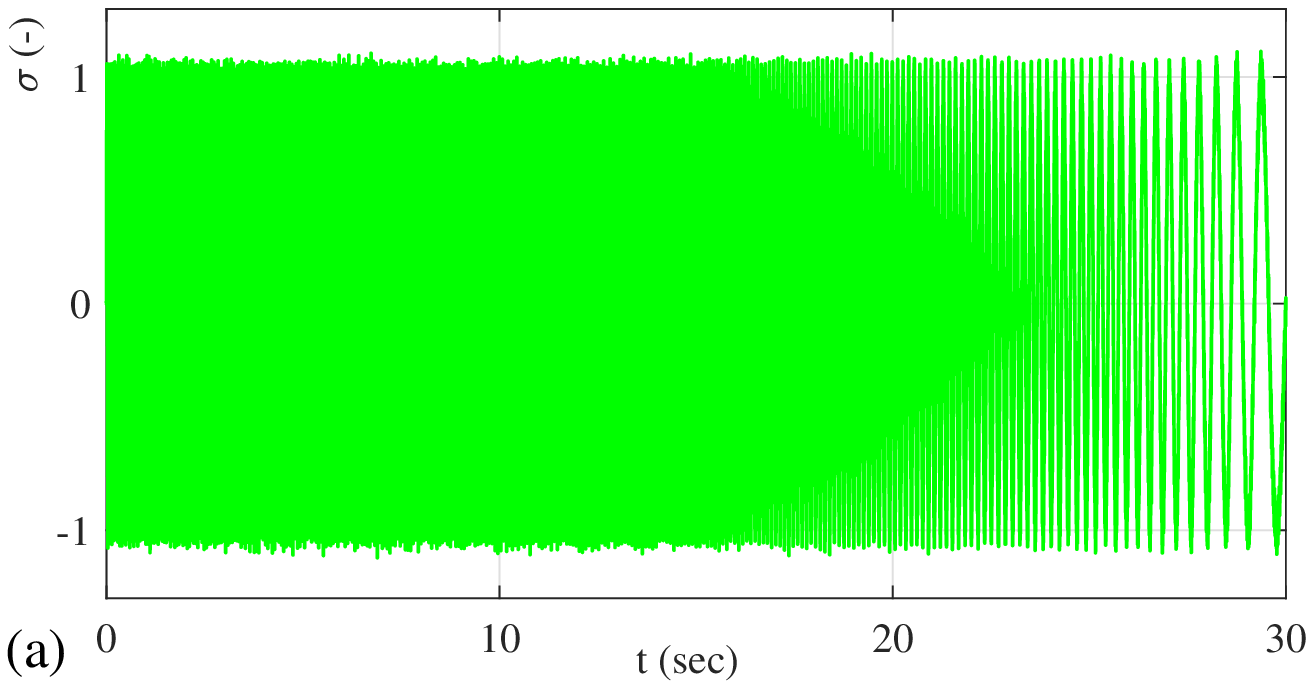}
\includegraphics[width=0.98\columnwidth]{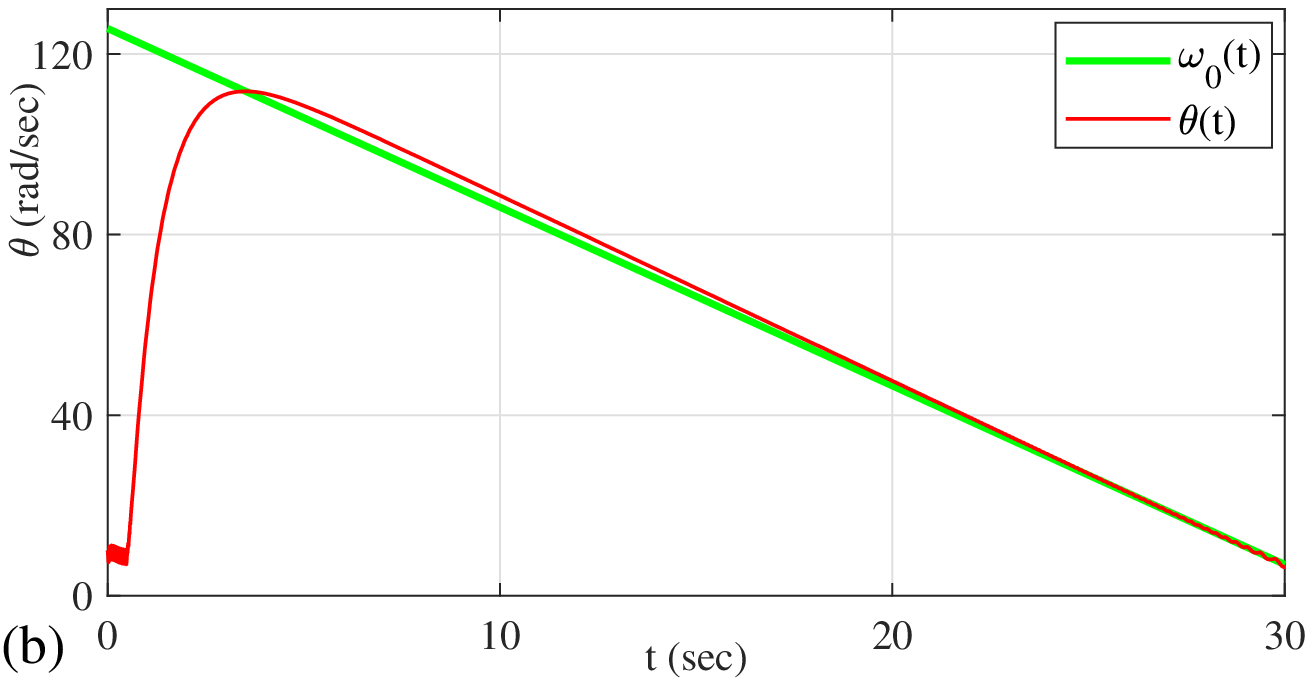}
\caption{(a) Simulated signal \eqref{eq:2:1} with $k=1$,
$\omega_0(t) = \omega_0(0)-\mu t$; (b) convergence of $\theta(t)$
with $\gamma = 200$, $\theta(0)=10$ rad/sec.} \label{fig:6}
\end{figure}
The $\theta(t)$ trajectory, for the assigned $\gamma = 200$, is
shown in Fig. \ref{fig:6} (b) versus the linearly changing
$\omega_0(t)$. It can be seen that after a certain time, the
$\theta(t)$ trajectory closely follows $\omega_0(t)$. The visible
residual estimation error $\varepsilon(t) \neq 0$ is clearly due
to the dynamically changing excitation frequency $\omega_0$; a
more detailed analysis of this effect is beyond the scope of this
work. Still, the estimator appears sufficiently robust to also follow
a continuously varying excitation frequency.

\subsection{Experimental case}
\label{sec:5:sub:2}

The proposed frequency estimation algorithm can equally be used
for mechanical system applications in which the oscillating
behavior (including vibrations) of structural parts and elements
with elasticities requires tracking of the frequency for various
purposes. Those include, e.g., controller tuning, condition and
fault monitoring, commissioning and identification, and others. An
experimental case provided below is realized in a laboratory
setting which is still representing a standard situation of an
unknown mechanical oscillation frequency in combination with a low
damping. Such application scenarios are commonly appearing in
two-inertia systems with either a load-depending varying natural
frequency, like in the flexible robotic joints (see e.g.
\cite{kim2019model,ruderman2020robotica}) and machine tools and
instruments with cantilevers and flexible frames (see e.g.
\cite{leonard2001free,beijen2018}). Or it is associated with
problems of varying excitation frequencies that propagate through
a flexible, correspondingly oscillating structure (see e.g.
\cite{baz2001active,helsen2013}).

For benchmarking with existing frequency estimators of the same
principle (cf. Sections \ref{sec:1}, \ref{sec:3}), the ANF
modified and proposed in \cite{mojiri2004} was also implemented,
in the same numerical setting as the proposed algorithm
\eqref{eq:4:1}-\eqref{eq:4:3}. Both frequency estimators are then
evaluated, as described below, on the same experimental data and
for the same initial and parametric conditions.

The case study with a simultaneous variation of the signal
amplitude $k(t)$ and angular frequency $\omega_0(t)$ is evaluated
experimentally suing the laboratory setup \cite{ruderman2020}
shown in Fig. \ref{fig:expsetup}. More details on the system
dynamics and modeling, that is however of lower relevance for the
recent work, an interested reader is referred to
\cite{ruderman2021}.
\begin{figure}[!h]
\centering (a) \includegraphics[width=0.45\columnwidth]{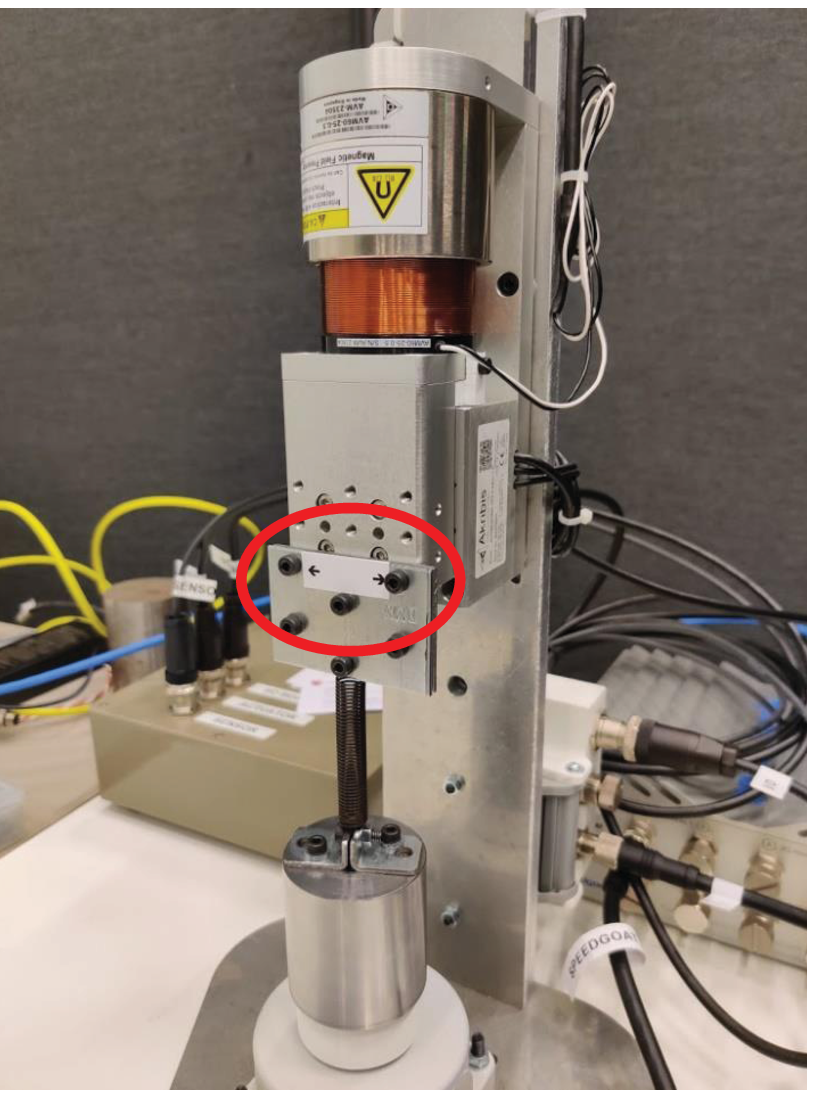}
\hspace{1mm} (b)
\includegraphics[width=0.32\columnwidth]{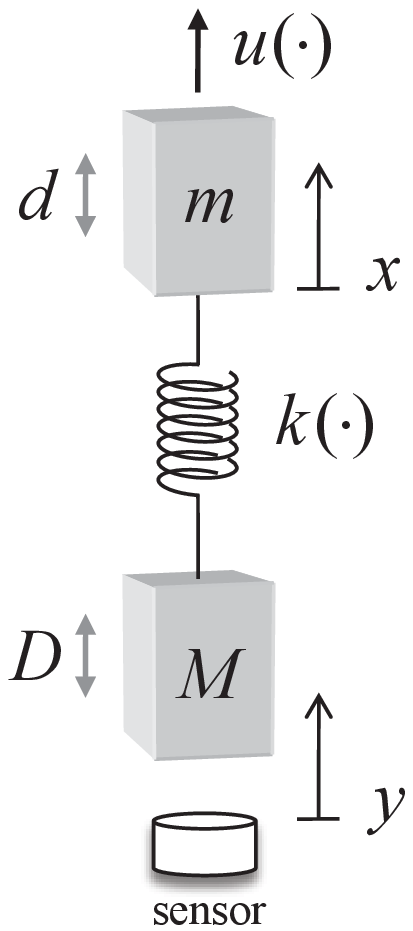}
\caption{Experimental setup: (a) laboratory view of the two-mass
oscillator, passive free-hanging mass ($M$) is placed on the
holder-disc when not in operation; (b) schematic representation of
two-mass oscillator, oscillating displacement $y \equiv \sigma(t)$
is measured contactlessly.}\label{fig:expsetup}
\end{figure}
The oscillating displacement of a free hanging load, attached
through a nearly linear spring ($K$), is measured contactlessly by
means of an inductive distance sensor, which has $\pm 12$ $\mu$m
repeatability. The sampling rate of real-time measurements, analog
to digital conversion with 16-bit quantization, is 2 kHz. Being
subject to both sensing ($\eta(t)$) and process disturbances
(non-modeled $d$ and $D$), the measured response $\sigma(t)$
constitutes an oscillating and noisy (cf. zoom-in in Fig.
\ref{fig:expinput}) time series. The double-mass experimental
setup, with the first 'active' mass ($m$) of the voice-coil-motor
actuator and second 'passive' mass ($M$) of a free hanging load,
allows for testing of the eigendynamics response and excited
(i.e., input-driven) response, both of a low-damped oscillating
nature.

The measured signal (Fig. \ref{fig:expinput}) represents an
exemplary response to an actuated chirp excitation, which yields a
linearly increasing angular frequency $\omega_0(t) = \mu t$, with
some initial value and $\mu>0$. The measured displacement
$\sigma(t)$ is freed (by data postprocessing) from a steady-state
offset, thus approaching a single harmonic, in accord with
\eqref{eq:2:1}. Note that apart from the noise, the measured
$\sigma(t)$ is still slightly affected by an asymmetry around
zero, therefore disclosing a certain additional non-constant bias.
\begin{figure}[!h]
\centering
\includegraphics[width=0.98\columnwidth]{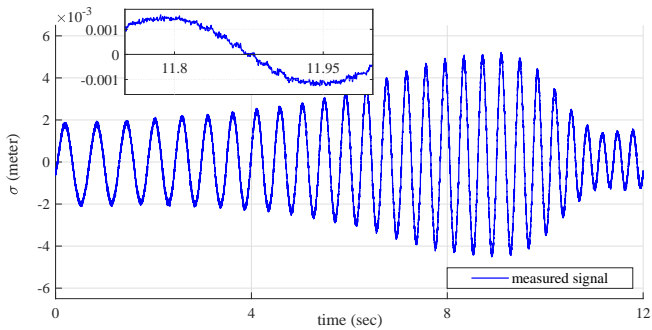}
\caption{Experimentally measured signal $\sigma(t)$ with varying
amplitude $k(t)$ and linearly increasing angular frequency
$\omega_0(t)$.} \label{fig:expinput}
\end{figure}

The evaluation setting, in terms of the parameters and initial
conditions, is summarized in Table \ref{tab:1} for both estimators
under benchmark.
\begin{table}[!h]
  \renewcommand{\arraystretch}{1.5}
  \caption{Estimators' evaluation setting}
  \footnotesize
  \label{tab:1}
  \begin{center}
  \begin{tabular} {|p{2cm}|p{2.7cm}|p{2.7cm}|}
  \hline \hline
  Set parameter &   ANF according to \cite{mojiri2004}   &  Proposed estimator       \\
  \hline \hline
  $\theta(0)$ (rad/sec)    &   20    &  20     \\
  \hline
  $\gamma$                 &  $4e4$    &  $2e4$    \\
  \hline
  $\zeta$                 &  $\{0.7, \, 1,\, 1.3 \}$    &  $\{0.7,\, 1,\, 1.3 \}$    \\
  \hline \hline
  \end{tabular}
  \end{center}
  \normalsize
\end{table}
Note that for the proposed estimator, the assigned $\gamma$-gain
is twice smaller than $\gamma$ for the ANF \cite{mojiri2004},
since for the latter it is already integrating the factor
$2\zeta$, cf. \cite[eqs.~(5),(6)]{mojiri2004}. Further, for the
sake of completeness and a fair comparison, the damping ratio
$\zeta$ is additionally included into the proposed estimator.
Recall that, otherwise, $\zeta=1$ is set as default and is not
appearing as a design parameter, according to
\eqref{eq:4:1}-\eqref{eq:4:3}.

The online estimate of the angular frequency $\theta(t)$ is shown
in Fig. \ref{fig:expresults} versus the linear progress of the
chirp-driven true $\omega_0(t)$ value. The estimated $\theta(t)$
values are plotted over each other for the ANF \cite{mojiri2004}
and the proposed estimator, for $\zeta=0.7$ in (a), for $\zeta=1$
in (b), and for $\zeta=1.3$ in (c), correspondingly.
\begin{figure}[!h]
\centering
\includegraphics[width=0.98\columnwidth]{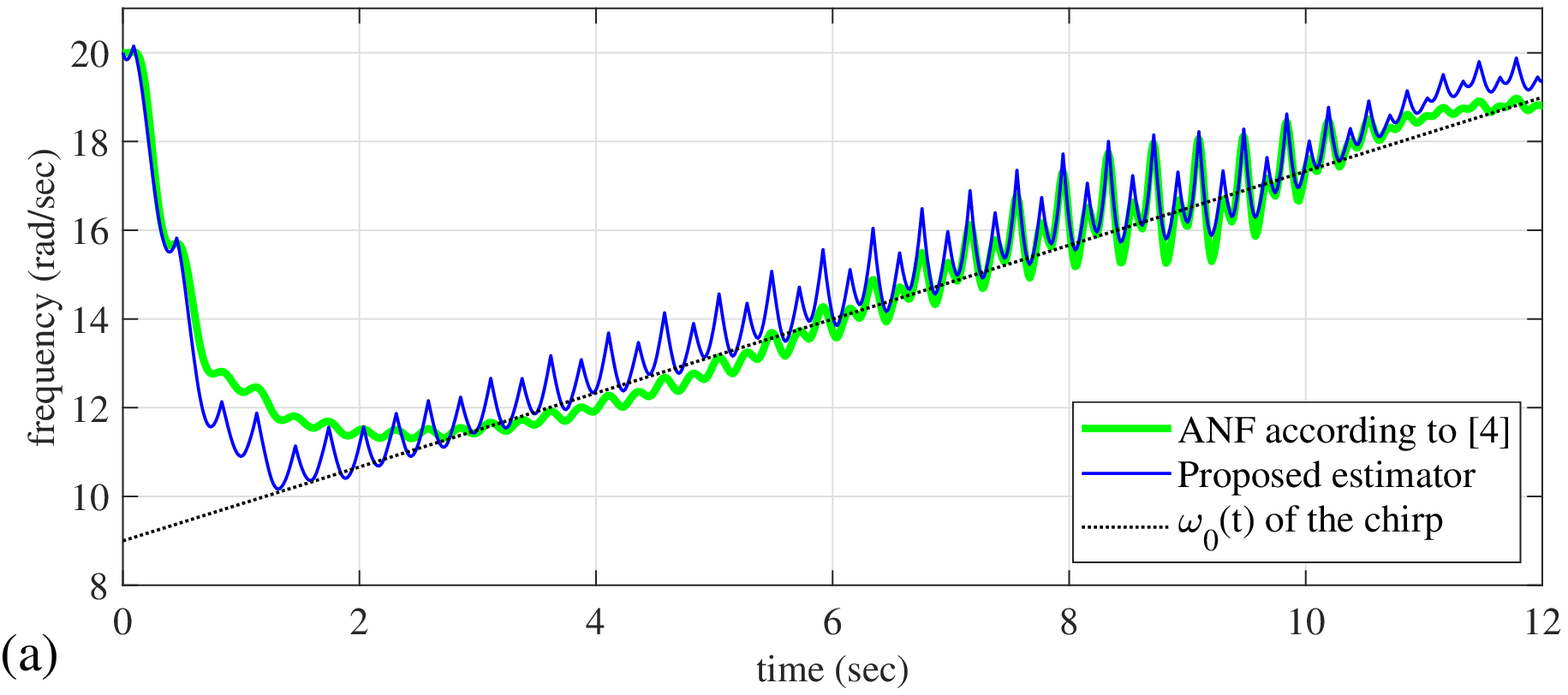}
\includegraphics[width=0.98\columnwidth]{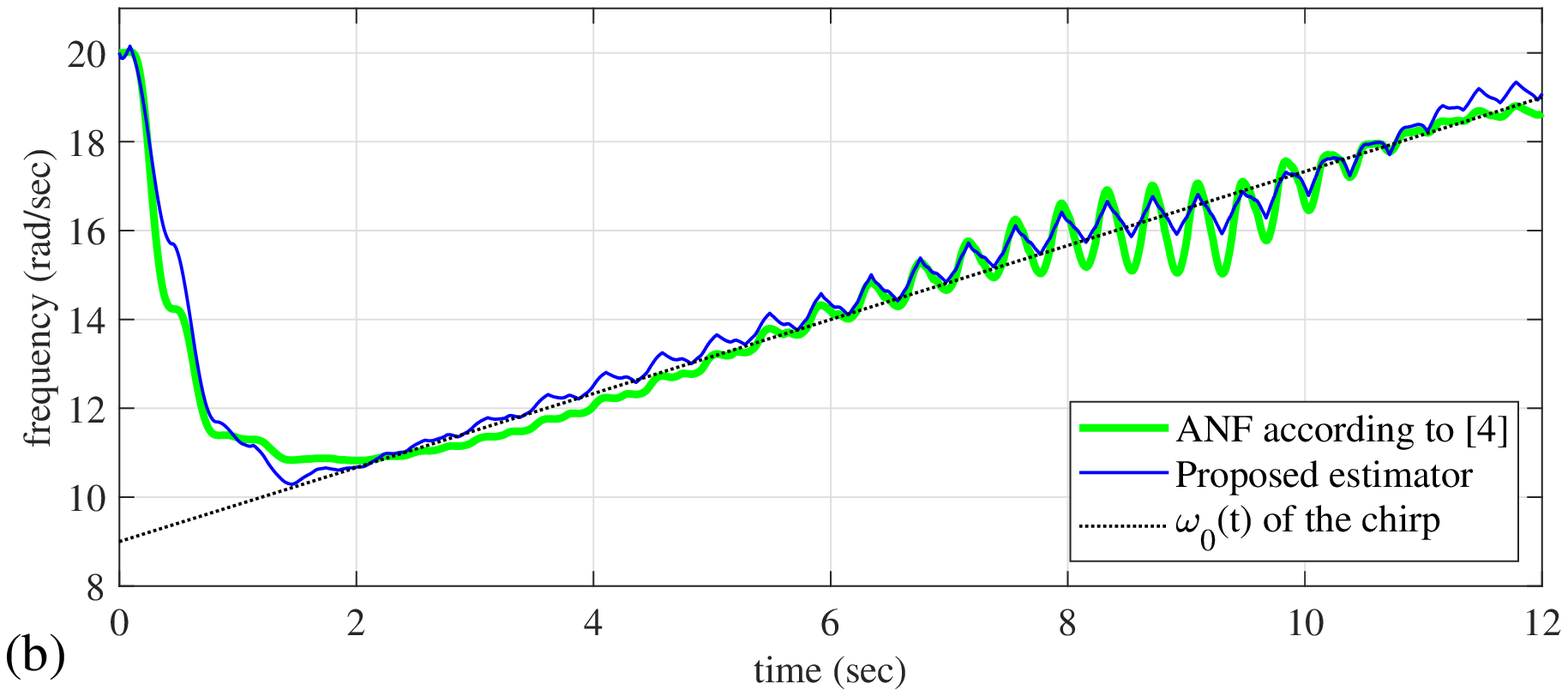}
\includegraphics[width=0.98\columnwidth]{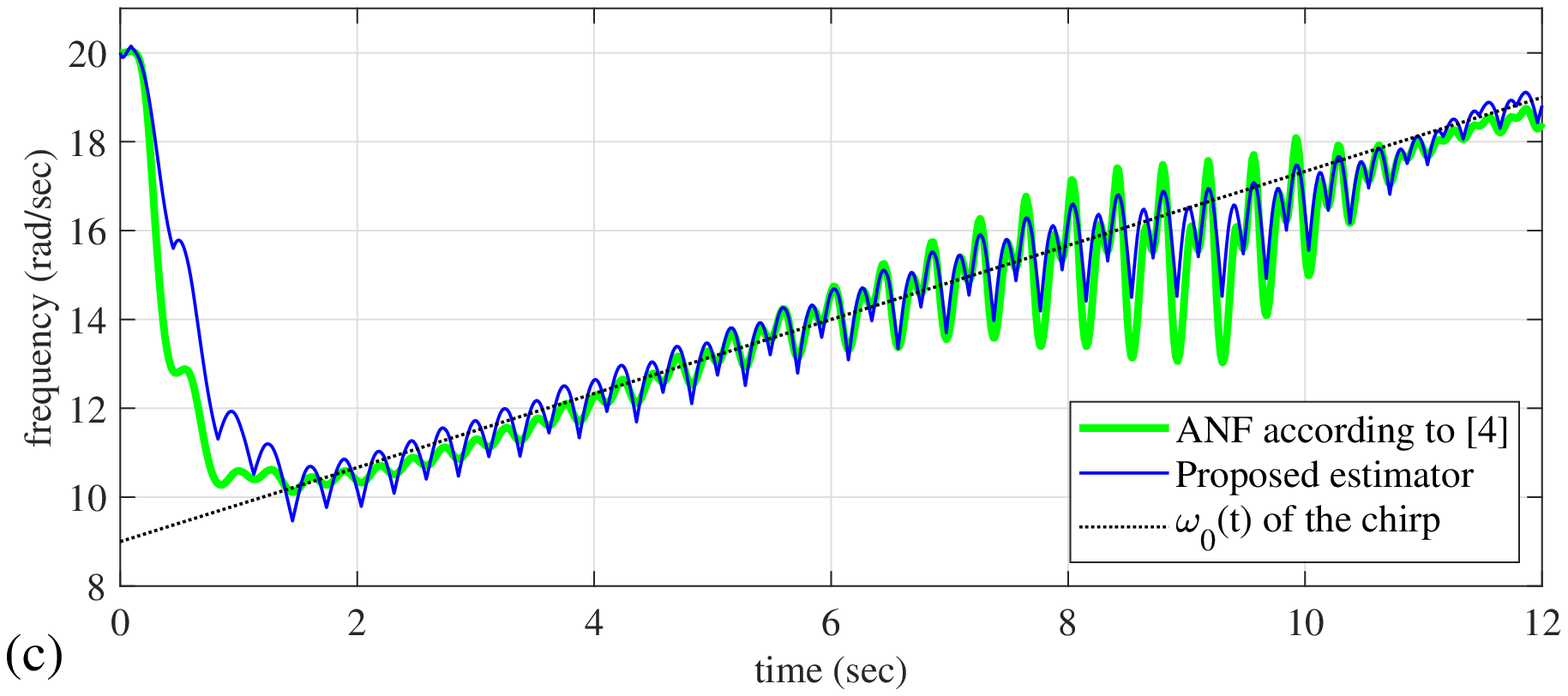}
\caption{Online estimate of the angular frequency $\theta(t)$
versus the chirp-driven $\omega_0(t)$: (a) for $\zeta=0.7$, (b)
$\zeta=1$, and (c) for $\zeta=1.3$.} \label{fig:expresults}
\end{figure}
One can recognize that in all three cases, the proposed estimation
algorithm follows the varying true angular frequency similarly as
\cite{mojiri2004} at the benefit of one design parameter. The
transient convergence appears slightly faster, and less deviations
to $\omega_0(t)$ appear for the critically damped case of
$\zeta=1$.

\section{Conclusions}
\label{sec:6}

In this paper, the problem of estimating the unknown frequency of
noisy sinusoidal signals with slowly varying amplitude has been
considered. The existing globally convergent frequency estimator
was modified by changing the scaling of the excitation signal and
output error, and canceling the damping ratio as a free design
parameter. Furthermore, the main robustification was achieved by
using the sign of an internal state, instead of the state itself,
within the adaptation law. Relying on the averaging theory of
periodic signals, an easy-to-follow and straightforward analysis
was developed in the frequency domain, assuming that the
timescales of a relatively fast harmonic (to be estimated) and
relatively slow drift of the amplitude can be separated. We
analyzed and proved the global asymptotic convergence of the
frequency estimate and determined the exponential convergence
rate. The dependency between band-limited white noise and the
resulting residual estimation error in a steady state was
established. The demonstrated numerical and experimental results
confirm the properties and performance of the proposed estimator.
A more detailed evaluation of the estimation performance and
comparison with other frequency estimation algorithms, also for
different experimental data-sets, are subject of our future works.

\bibliographystyle{elsarticle-num}        

\bibliography{references}

\end{document}